\documentclass[submission,copyright]{eptcs}%,creativecommons

\usepackage[english]{babel}
\usepackage{breakurl}             
\usepackage {array} 
\usepackage{gastex}
\usepackage{amssymb,amsfonts,amsmath,amsthm}
\newtheorem {teo} {Theorem}
\newtheorem {exmp} {Example}
\newtheorem {Rmk} {Remark}
\newtheorem {Lemma} {Lemma}
\DeclareMathOperator{\per}{per}
\DeclareMathOperator{\lexp}{lexp}
\DeclareMathOperator{\rev}{rev}
\def\CF{{\sf CF}}
\def\OF{{\sf OF}}
\def\TM{{\sf TM}}
\def\U{_\infty^{\phantom{i}a}T}

\title{Constructing Premaximal Binary Cube-free Words\\ of Any Level}
\author{Elena A. Petrova
\institute{Ural Federal University\\
Ekaterinburg, Russia}
\email{captain@akado-ural.ru}
\and Arseny M. Shur
\institute{Ural Federal University\\
Ekaterinburg, Russia}
\email{arseny.shur@usu.ru}
}

\begin{document}
\maketitle

\begin{abstract}
We study the structure of the language of binary cube-free words. Namely, we are interested in the cube-free words that cannot be infinitely extended preserving cube-freeness. We show the existence of such words with arbitrarily long finite extensions, both to one side and to both sides. 
\end{abstract}

\section{Introduction}
The study of repetition-free words and languages remains quite popular in combinatorics of words: lots of interesting and challenging problems are still open. The most popular repetition-free binary languages are the \textit{cube-free} language $\CF$ and the \textit{overlap-free} language $\OF$. The language $\CF$ is much bigger and has much more complicated structure. For example, the number of overlap-free binary words grows only polynomially with the length \cite{RS}, while the language of cube-free words has exponential growth \cite{Bra}. The most accurate bounds for the growth of $\OF$ is given in \cite{JPB} and for the growth of $\CF$ in \cite{Sh09dlt}. Further, there is essentially unique nontrivial morphism preserving $\OF$ \cite{See}, while there are uniform morphisms of any length preserving $\CF$ \cite{CR}. The sets of two-sided infinite overlap-free and cube-free binary words also have quite different structure, see \cite{Sh_r00}.

Any repetition-free language can be viewed as a poset with respect to prefix, suffix, or factor order. In case of prefix [suffix] order, the diagram of such a poset is a tree; each node generates a subtree and is a common prefix [respectively, suffix] of its descendants. The following questions arise naturally. \textit{Does a given word generate finite or infinite subtree? Are the subtrees generated by two given words isomorphic? Can words generate arbitrarily large finite subtrees?} For some power-free languages, the decidability of the first question was proved in \cite{Cur} as a corollary of interesting structural properties. The third question for ternary square-free words constitutes Problem~1.10.9 of \cite{AS}. For all $k$th power-free languages, it was shown in \cite{BEM} that the subtree generated by any word has at least one leaf. Note that considering the factor order instead of the prefix or the suffix one, we get a more general acyclic graph instead of a tree, but still can ask the same questions about the structure of this graph. For the language $\OF$, all these questions were answered in \cite{Sh_r98,Sh11sf}, but almost nothing is known about the same questions for $\CF$.

In this paper, we answer the third question for the language $\CF$ in the affirmative. Namely, we construct cube-free words that generate subtrees of any prescribed depth and then extend this result for the subgraphs of the diagram of factor order. 

\section {Preliminaries}
Let us recall necessary notation and definitions. We consider finite and infinite words over the binary alphabet $\Sigma=\{a,b\}$. If $x$ is a letter, then $\bar{x}$ denotes the other letter. By default, ``word'' means a finite word. Words are denoted by uppercase characters (to denote one-sided infinite words, we add the subcsript $_\infty$ at the corresponding side). We write $\lambda$ for the \textit{empty word}, and $|W|$ for the length of the word $W$. The letters of nonempty finite and right-infinite words are numbered from 1; thus, $W=W(1)W(2)\cdots W(|W|)$. The letters of left-infinite words are numbered by \textit{all nonnegative integers}, starting from the right.

We use standard definitions of factors, prefixes, and suffixes of a word. The factor $W(i)\cdots W(j)$ is written as $W(i\ldots j)$. A positive integer $p\leq |W|$ is a \textit{period} of a word $W$ if $W(i)=W(i{+}p)$ for all $i\in \{1,\ldots,|W|{-}p\}$. The minimal period of $W$ is denoted by $\per(W)$. The \textit{exponent} of a word is the ratio between its length and its minimal period: $\exp(W)=|W|/\per(W)$. Words of exponent 2 and 3 are called squares and cubes, respectively. The \textit{local exponent} of a word is the number $\lexp(W)=\sup\{\exp(V)\vert V\text{ is a factor of } W\}$. Periodic words possess the \textit{interaction property} expressed by the textbook Fine and Wilf theorem: if a word $U$ has periods $p$ and $q$, and $|U|\ge p+q-\gcd(p,q)$, then $U$ has the period $\gcd(p,q)$.

A word $W$ is $\beta$-free [$\beta^+\!$-free] if $\lexp(W)<\beta$ [respectively, $\lexp(W)\le \beta$]. The 3-free words are called \textit{cube-free}, and the $2^+\!$-free words are \textit{overlap-free}. The language of all cube-free [overlap-free] words over $\Sigma$ is denoted by $\CF$ [respectively, $\OF$]. 
A morphism $f:\Sigma^+\to\Sigma^+$ \textit{avoids an exponent} $\beta$ if the condition $\lexp(U)<\beta$ implies $\lexp(f(U))<\beta$ for any word $U$. The following theorem allowes one to check cube-freeness of a morphism over the binary alphabet.

\begin {teo}[\cite{RW}] {\label {morph}}
A morphism $f: \Sigma^+ \rightarrow \Sigma^+$ is cube-free if and only if the word \newline $f(aabbababbabbaabaababaabb)$ is cube-free.
\end {teo}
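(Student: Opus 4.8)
For the forward direction I would first verify directly that the test word $T=aabbababbabbaabaababaabb$ is itself cube-free; this is a finite check on a word of length $24$. Once this is known, the implication is immediate from the definition of a cube-free morphism: since $T\in\CF$, i.e. $\lexp(T)<3$, the defining property $\lexp(U)<3\Rightarrow\lexp(f(U))<3$ applied to $U=T$ gives that $f(T)$ is cube-free. Thus all the content lies in the converse.

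For the converse I would argue by a minimal counterexample. Assume $f(T)\in\CF$ but $f$ is not cube-free, and pick a \emph{shortest} cube-free word $W$ whose image $f(W)$ contains a cube $E^3$. Writing $f(W)=B_1B_2\cdots B_k$ with $B_j=f(W(j))$ and $k=|W|$, minimality forces the cube to begin inside the first block $B_1$ and to end inside the last block $B_k$: if $E^3$ were contained in $B_2\cdots B_k$ or in $B_1\cdots B_{k-1}$, then $W(2\ldots k)$ or $W(1\ldots k{-}1)$ would already be a shorter counterexample. The goal is then to bound $k$ by a small constant.

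The heart of the proof is this length bound, and it is where I expect the real difficulty. The three copies of $E$ endow a long central portion of $f(W)$ with the period $p=|E|$, while the block boundaries induced by $f$ cut this periodic region into pieces. I would track how the period $p$ displaces successive block boundaries and apply the interaction property (the Fine and Wilf theorem recalled above) to the resulting overlaps: whenever two block occurrences are long enough and share a period, their contents are forced to coincide, and this propagates a periodicity back to the preimage $W$. The intended conclusion is that if $k$ were large then $W$ would inherit a short period, and a cube-free word cannot have length at least three times its minimal period; ruling out the finitely many short configurations by inspection should yield $|W|\le 5$. Equivalently, it suffices to check that $f(V)\in\CF$ for every cube-free word $V$ with $|V|\le 5$. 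The subtle point, and the main obstacle, is that this bound must be \emph{absolute}, independent of the block lengths of $f$; this is exactly what the synchronization analysis above is designed to guarantee.

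Finally I would close the loop using the specific shape of $T$. A direct enumeration shows that there are exactly $16$ cube-free binary words of length $5$ (the words of length $5$ avoiding $aaa$ and $bbb$), and one checks that every one of them occurs as a factor of $T$; in fact $T$ has $20$ factors of length $5$ and realizes all $16$ of them, so it is essentially a shortest word with this universality property. Consequently every cube-free $V$ with $|V|\le 5$ is a factor of $T$, whence $f(V)$ is a factor of $f(T)$; since $f(T)$ is cube-free and factors of cube-free words are cube-free, each such $f(V)$ is cube-free. Combined with the length bound, this shows that $f$ admits no counterexample at all, i.e. $f$ is cube-free, contradicting the assumption and completing the proof.
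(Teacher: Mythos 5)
The paper itself gives no proof of this statement: it is imported verbatim from Richomme and Wlazinski \cite{RW} and used as a black box, so your proposal has to stand on its own. The easy parts of it do: $T=aabbababbabbaabaababaabb$ is cube-free, there are exactly $16$ cube-free binary words of length $5$ (those avoiding $aaa$ and $bbb$), and all $16$ really do occur among the $20$ length-$5$ factors of $T$, so the forward direction and the final containment step are correct as far as they go. The genuine gap is exactly the step you yourself flag as ``the heart of the proof'': the claim that a shortest cube-free $W$ with $f(W)\notin\CF$ must satisfy $|W|\le 5$. You do not prove this; you only express the hope that Fine and Wilf plus a synchronization analysis will produce an absolute bound of $5$ independent of the block lengths of $f$. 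That bound is the entire content of the theorem, and it is almost certainly false: the bounded-length test established in this same line of work (Richomme--Wlazinski and Wlazinski's subsequent paper on test sets for $k$-power-free binary morphisms) requires images of cube-free words of length up to about $9$ or $10$ for $k=3$, and those papers discuss why much smaller bounds do not suffice. A non-cube-free morphism that nevertheless preserves cube-freeness of all words of length $5$ is thus not touched by your argument.

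Worse, the failure cannot be repaired inside your framework. If the true bound on a minimal counterexample is any $N\ge 6$, the containment endgame collapses for arithmetic reasons: $T$ has only $25-N$ factors of length $N$, which is already smaller than the number of cube-free words of that length ($19$ factors versus $24$ cube-free words at $N=6$). So cube-freeness of $f(T)$ cannot certify cube-freeness of $f(V)$ for all cube-free $V$ of length $6$ or more merely by factorial containment. The actual proof in \cite{RW} must therefore extract more from the single hypothesis $f(T)\in\CF$ than the cube-freeness of the images of $T$'s factors --- namely structural, synchronization-type properties of the pair $\{f(a),f(b)\}$ forced by the particular arrangement of factors inside this word of length $24$. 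Your reduction to ``small length bound plus factor containment'' is not just incomplete at its key step; it is a route that this specific test word cannot support.
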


The \textit{Thue--Morse morphism} $\theta$ is defined over $\Sigma^+$ by the rules $\theta(a)=ab$, $\theta(b)=ba$. The words 
$$
T_n^a=\theta^n(a),\ T_n^b=\theta^n(b)\ (n\geq0)
$$
are called \textit{Thue--Morse blocks} or simply $n$-blocks. From the definition it follows that $T_{n{+}1}^x=T_n^xT_n^{\bar{x}}$.
Hence, the sequences $\{T_n^a\}$ and $\{T_n^b\}$ have ``limits'', which are right-infinite \textit{Thue-Morse words} $T_\infty^a$ and $T_\infty^b$, respectively. We also consider the reversal $\U$ of $T_\infty^a$. The factors of Thue-Morse words are \textit{Thue-Morse factors}; the set of all these factors is denoted by $\TM$. Note that any word in $\TM$ can be written as $W=xQ_1\cdots Q_ny$, where $x,y\in\Sigma\cup\{\lambda\}$, $Q_1,\ldots,Q_n\in\{ba, ab\}$. It is known since Thue \cite{Th12} that $\TM\subset\OF$.

Let $L\subset\Sigma^*$ and $W\in L$. Any word $U\in\Sigma^*$ such that $UW\in L$ is called a \textit{left context} of $W$ in $L$. The word $W$ is \textit{left maximal} [\textit{left premaximal}] if it has no nonempty left contexts [respectively, finitely many left contexts]. The \textit{level} of the left premaximal word $W$ is the length of its longest left context; thus, left maximal words are of level 0. The right counterparts of the above notions are defined in a symmetric way. We say that a word is \textit{maximal [premaximal]} if it is both left and right maximal [respectively, premaximal]. The \textit{level} of a premaximal word $W$ is the pair $(n,k)\in \mathbb{N}$ such that $n$ and $k$ are the length of the longest left context of $W$ and the length of its longest right context, respectively.

In particular, a word $W\in\CF$ is maximal if by adding any of the two letters on the left or on the right we obtain a cube. The word $aabaabaa$ is an example of such a word.

The aim of this paper is to prove the following theorems:

\begin {teo} \label{left}
In $\CF$, there exist left premaximal words of any level $n\in\mathbb{N}_0$.
\end {teo}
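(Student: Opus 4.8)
The plan is to work with the tree of left contexts. Since $\CF$ is closed under taking factors, the set of left contexts of a fixed word $W$ is closed under taking suffixes; ordered by the suffix relation (drop the leftmost letter to pass to the parent) it forms a tree whose depth equals the length of the longest left context. Hence ``$W$ is left premaximal of level $n$'' means exactly that this tree is finite of depth $n$, and the goal becomes: for every $n$, exhibit a cube-free word whose left-context tree is finite of depth exactly $n$. The base case $n=0$ is witnessed by $aabaabaa$: prepending $a$ creates the prefix $aaa$ and prepending $b$ creates the prefix $baabaabaa=(baa)^3$, so both extensions leave $\CF$. More generally, any cube-free word beginning with $aabaabaa$ is left maximal for the same reason, which supplies a convenient family of ``dead ends.'' I will also use that reversal is a bijection of $\CF$, so an equivalent task is to produce right-context trees of each finite depth.

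For the general case I would design, for each $n$, an explicit word $W_n=B_n\,C$, where $C$ is a fixed left-maximal cap beginning with $aabaabaa$ and $B_n$ is a block of length $n$ chosen so that the only cube-free way to rebuild the dead-end prefix of $C$ is to prepend the letters of $B_n$ back one at a time. Concretely, $B_n$ should be built from Thue--Morse blocks $T_k^x$ (equivalently, as an image under a cube-free morphism), so that $W_n$ lies in $\CF$; cube-freeness of the relevant morphism can be certified by Theorem~\ref{morph}, i.e. by checking that the image of the single test word $aabbababbabbaabaababaabb$ avoids cubes. The point of routing the construction through $\TM$ and a cube-free morphism is that the intended left extensions are overlap-free, while every deviation from them is forced to create a factor of exponent at least $3$.

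There are then two things to verify. For the lower bound, I would write down the intended left context $V_n$ with $|V_n|=n$ explicitly and check $V_nW_n\in\CF$, so that $W_n$ has level at least $n$. For the upper bound and finiteness, I would prove a forcing lemma: at each step of \emph{any} cube-free left extension of $W_n$, only one of the two letters keeps the word cube-free, the other completing a short cube whose existence is detected by the Fine--Wilf interaction property (two nearby periods, one coming from the $\TM$-structure and one from the boundary of $B_n$, forcing a small period and hence a cube). Iterating the lemma shows that every cube-free left extension agrees with $V_n$ read from $W_n$ outward, and that after exactly $n$ prepended letters the word begins with $aabaabaa$ and is therefore a dead end. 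Together these give that the left-context tree of $W_n$ is finite of depth exactly $n$, i.e. $W_n$ is left premaximal of level $n$.

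The main obstacle is the upper-bound/forcing step, because it must rule out \emph{all} alternative cube-free left extensions rather than merely follow the intended one. The naive attempt of peeling a letter off a dead end fails precisely here: for instance $abaabaa$ can be extended on the left by both $a$ (giving the dead end $aabaabaa$) and $b$ (giving $babaabaa$, which remains cube-free and is itself extendable), so its tree branches and may not even be finite. The delicate point is therefore to choose $B_n$ so that $W_n$ is simultaneously cube-free and has every side-branch killed by a long-range cube; balancing these two requirements is where the combinatorial work concentrates, with Theorem~\ref{morph} handling cube-freeness of the construction and a Fine--Wilf-based cube-detection lemma handling the uniqueness of the extension.
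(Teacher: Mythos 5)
Your outline correctly identifies the shape of the argument the paper actually uses --- force the left extension letter by letter by arranging that the wrong letter completes a cube, keep the whole construction inside $\CF$ by building the auxiliary material out of Thue--Morse blocks and certifying a morphism via Theorem~\ref{morph}, and use Fine--Wilf to control interacting periods. But as written it is a plan, not a proof: every load-bearing step is deferred (``I would design $B_n$\dots'', ``I would prove a forcing lemma\dots''), and those deferred steps are exactly where the entire content of the paper lies. The paper's mechanism for prohibiting a letter at length $k{+}1$ is not a \emph{short} cube detectable locally: it is the global cube $(X_kW_kS_kx_k)^3$ occupying the whole word built so far, which is why the construction is forced to be iterative, with $W_{k+1}$ obtained by tripling $W_k$ (so $|W_n|$ grows exponentially), and why a ``buffer'' word $S_k$ must be inserted at each iteration to avoid spurious cubes at the seams. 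Choosing the $S_k$ consistently (Table~\ref{tab1}, the $108$-uniform morphism $\psi$, the special trick for the factors $T_2^bT_2^aT_2^bT_2^a$) and then proving that no unintended cube is created (Lemma~\ref{ml1}) is the actual work, and your proposal contains no candidate for it beyond the correct observation that it is delicate.

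There is also a concrete internal tension in your termination mechanism. You want the intended left context $V_n$ to be overlap-free (a $\TM$-factor, so that the extensions stay cube-free), yet you want $V_nW_n$ to begin with the dead end $aabaabaa$; since $aabaabaa$ has exponent $8/3>2$, these requirements are incompatible once $n\ge 8$. The paper terminates differently: after building $W_{n+1}$ with fixed left context $X_n$ it applies the tripling construction one more time, $\overline{W}_n=W_{n+1}\overline{S}_n\,P_nW_{n+1}\overline{S}_n\,P_nW_{n+1}\overline{S}_n$, so that the \emph{allowed} one-letter extension of $X_n$ now also completes a cube, while the other extension was already prohibited by $W_{n+1}$; together with Lemma~\ref{main2} ($X_n\overline{W}_n\in\CF$) this pins the level at exactly $n$. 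Your ``reach $aabaabaa$ after $n$ steps'' ending would have to be replaced by something of this long-range-cube type, i.e.\ by yet another instance of the construction you have not supplied. So the proposal is a reasonable research plan pointed in the right direction, but it does not constitute a proof of the theorem.
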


\begin {teo}\label{lr}
In $\CF$, there exist premaximal words of any level $(n,k)\in\mathbb{N}_0^2$.
\end {teo}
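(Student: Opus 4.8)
The plan is to deduce Theorem~\ref{lr} from Theorem~\ref{left} by gluing a left premaximal word to a right premaximal one through a cube-free ``bridge'' that decouples the two ends. First I would record the mirror image of Theorem~\ref{left}: since reversal is a bijection of $\CF$ onto itself that exchanges left and right contexts (and preserves their lengths), for every $k\in\mathbb{N}_0$ there is a right premaximal word $R_k\in\CF$ of level exactly $k$. Fixing also a left premaximal word $L_n$ of level $n$ given by Theorem~\ref{left}, these two words are the ``halves'' from which I will build a word of level $(n,k)$.

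Next I would form the word $W=L_n\,B\,R_k$, where $B$ is a cube-free connector chosen so that $W\in\CF$ and so that the two junctions are \emph{rigid}, i.e. no periodicity can propagate across them. A convenient way to obtain both cube-freeness and rigidity simultaneously is to take $L_n$, $B$, $R_k$ inside the image of a single cube-free endomorphism $f$ (whose cube-freeness is certified by Theorem~\ref{morph}); then any repetition in $W$ descends from a repetition of a short pre-image, and $W\in\CF$ is reduced to the finite check of Theorem~\ref{morph}. The \textbf{easy half} of the level computation is then automatic: prepending a letter $x$ to the cube-free word $W$ can create a cube only as a \emph{prefix} of $xW$ (a cube avoiding position $1$ would already be a factor of $W$), and since any prefix of $xW$ that contains $L_n$ is an extension of $L_n$, every left context of $W$ is also a left context of $L_n$. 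Because $L_n$ is left premaximal, so is $W$, and its left level is at most $n$; the symmetric statement bounds the right level by $k$.

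The core of the argument, and the main obstacle, is to show that these bounds are attained, equivalently that left-extendability of $W$ is governed entirely by the prefix $L_n$ (and dually on the right). Concretely I must prove the decoupling lemma: \emph{no cube occurring in $xW$ reaches past $xL_n$ into $B$.} Suppose on the contrary that some $xW$ has a prefix cube $ZZZ$ of period $p=|Z|$ with $3p>1+|L_n|$, so that the cube overruns the first junction. Then a long prefix of $L_nB$ carries the period $p$; if $B$ is engineered so that every sufficiently long prefix of $L_nB$ also carries a short period $q$ determined by the local structure at the junction, the Fine and Wilf interaction property forces this prefix to have period $d=\gcd(p,q)\le q$. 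But a window of length $\ge 3d$ with period $d$ is a cube, contradicting $L_nB\in\CF$. Hence no prefix cube crosses into $B$, the left contexts of $W$ coincide with those of $L_n$, and in particular the longest left context $U_n$ of $L_n$ survives, i.e. $U_nW\in\CF$ (the same decoupling shows that prepending $U_n$ cannot create a cube reaching into $BR_k$). Therefore the left level of $W$ equals $n$; the mirror argument, applied at the second junction, gives right level $k$.

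Combining the two sides, $W$ is premaximal of level exactly $(n,k)$, proving the theorem for every $(n,k)\in\mathbb{N}_0^2$. The residual, routine work is to exhibit an explicit connector $B$ (or an explicit morphism $f$) realizing the junction conditions and to carry out the finite case analysis verifying them; I expect the delicate point throughout to be the rigidity estimate, since it must simultaneously kill long-range periods spanning a junction and avoid introducing short periods that would themselves violate cube-freeness.
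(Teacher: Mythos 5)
Your overall strategy---glue a left premaximal word of level $n$ to a reversed copy serving as a right premaximal word of level $k$ through a cube-free bridge, then show no cube crosses a junction---is exactly the route the paper takes. But there is a genuine gap at the very first step, and the paper flags it explicitly: the left premaximal words $\overline{W}_n$ produced in the proof of Theorem~\ref{left} cannot play the role of your $L_n$, because the words $X_n\overline{W}_n$ (the word together with its longest left context $X_n$) turn out to be \emph{right maximal}. Hence for these words there is no connector $B$ whatsoever keeping $X_nL_nB$ cube-free; the longest left context does not ``survive'' the gluing, and the decoupling lemma you need is false as stated. Repairing this is not residual routine work: one must go back inside the inductive construction and manufacture a different word $\widetilde{W}_{n_i}$ that is simultaneously left premaximal of level $n_i$ \emph{and} infinitely extendable to the right. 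The paper does this by replacing the final buffer $\overline{S}_{n_i}$ of~(\ref{eq:pm}) with $S_{n_i+1}S_{n_i+2}$, restricting to the levels $n_i=32i+3$ where this is clean, and recovering all pairs $(n,k)$ by trimming letters off the extremal contexts; only then is the Thue--Morse block $T_m^{\bar{x}}$ with $2^{m-2}>|X_{n_i}\widetilde{W}_{n_i}|$ inserted as the bridge.

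Two of your auxiliary mechanisms would also fail. First, placing $W=L_nBR_k$ inside the image of a cube-free morphism $f$ gains nothing: the words from Theorem~\ref{left} are not $f$-images of any fixed $f$, and if $W$ were a full image $f(u)$, then every cube-free extension of $u$ would lift to an extension of $W$, so the premaximality of $W$ would have to be re-derived for the preimage $u$. Second, your rigidity estimate is self-defeating: if every sufficiently long prefix of $L_nB$ carried a short period $q$, then a window of length $3q$ with period $q$ would already be a cube inside $L_nB$. The paper's junction argument runs the other way: the bridge $T_m^{\bar{x}}$ is overlap-free and longer than everything to its left, so no cube can contain it; the first letter of the bridge \emph{breaks} the period of $\widetilde{W}_{n_i}$, which bounds the period of any junction-crossing cube; and the position of the rightmost occurrence of $aabaa$ then forces a contradiction, leaving only short Thue--Morse factors to be checked by brute force.
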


\section {Construction of premaximal words}

Theorem~\ref{left} is proved by exhibiting a series of left premaximal words, containing words of any level. The series is constructed in two steps:
\begin {enumerate}
\item building an auxiliary series $\{W_n\}_0^{\infty}$ such that each word $W_n$ has, up to one easily handled exception, a unique left context of any length $\le n$;
\item completing the word $W_n$ to a left premaximal word $\overline{W}_n$.
\end {enumerate}

If a word $W\in\CF$ has a unique left context of length $n$, say $U$, and two left contexts of length $n{+}1$, then we say that $U$ is the \textit{fixed} left context of $W$ (see the picture below). %almost fixed: any word $VW\in\CF$ such that $|V|=n$, is left premaximal.

\centerline{ 
\begin{picture}(90,21)(0,2)
\gasset{Nw=0.15,Nh=3,Nmr=0,AHnb=0}
\drawline(10,20)(15,15)
\drawline(10,15)(20,15)(25,10)(90,10)
\node(1)(50,10){}
\node(2)(90,10){}
\drawline(10,10)(15,8)(10,5)
\drawline(15,8)(20,5)(15,2)
\drawline(20,5)(25,10)
\put(37.5,10.5){\makebox(0,0)[cb]{$U$}}
\put(70,10.5){\makebox(0,0)[cb]{$W$}}
\put(6,7.5){\makebox(0,0)[rc]{$\cdots$}}
\put(6,17.5){\makebox(0,0)[rc]{$\cdots$}}
\end{picture}
}

\begin{exmp} \label{abb}
Let $W=aabaaba$. Since $aW=aaa\cdots$, $abW=(aba)^3$, but $aabbW,babbW\in\CF$, we see that the fixed left context of the word $W$ equals $abb$.
\end{exmp}

Now let us explain step 1. We build the series $\{W_n\}_0^{\infty}$ inductively, one word per iteration, in a way that the fixed left context $X_n$ of the word $W_n$ is of length $\ge n$ (we will discuss the mentioned exception at the moment of its appearance). We put $W_0=aabaaba$ and note that the left-infinite word
$$
\U\, abaaba =\cdots abba\,baab\,baab\,abbW_0
$$
is cube-free. So, we require that each word $W_n$ satisfies the following properties:
\begin{itemize}
\item[(W1)] $W_n$ starts with $W_0$;
\item[(W2)] any word $\U(k\ldots1)$ is a left context of $W_n$;
\item[(W3)] some word $\U(k\ldots1)$ with $k\ge n$ is the fixed left context of $W_n$, denoted by $X_n$;
\item[(W4)] if $|X_n|>n$, then $W_{n{+}1}=W_n$ (\textit{trivial} iterations). 
\end{itemize}
The basic idea for obtaining $W_{n+1}$ from $W_n$ at nontrivial iterations is to let
\begin {equation} \label{eq:nos}
W_{n+1} = \underbrace{\phantom{xX_n}W_n}  \underbrace{xX_nW_n}\underbrace{ xX_nW_n},
\end {equation}
where $x$ is the letter ``prohibited'' at the $(n{+}1)$th iteration, i.e. $xX_n$ certainly is not a left context of $W_{n+1}$. Thus, the fixed left context of $W_{n{+}1}$ is longer than the one of $W_n$ by definition.
\begin {Rmk}
An attempt to build the series $\{W_n\}_0^{\infty}$ directly by (\ref{eq:nos}) fails because cubes will occur at the border of some words $W_n$ and $xX_n$. For instance, let us construct the word $W_4$. We have $W_3=W_0$ in view of (W4) and Example~\ref{abb}, $X_3=abb$, and the context $aabb$ should be forbidden in view of (W2), because $\U(4\ldots1)=babb$. So, $x=a$ and the word $W_3xX_3$ has the factor $aaa$.
\end {Rmk}
A way out from this situation is the following idea: we insert a special ``buffer'' word after each of three occurrences of $W_n$ in (\ref{eq:nos}). This insertion allows us to avoid local cubes at the border. Below we use the following notation:
\begin {itemize}
\item[-] $P'_n=xX_n$, $P_n=\bar{x}X_n$, where $x$ is the letter, prohibited at the $(n{+}1)$th iteration; thus, $P_n\in\TM$;
\item[-] $S_n$ is the word inserted after $W_n$ at the $(n{+}1)$th iteration;
\item[-] $S'_n=S_0S_1\cdots S_n$ is the factor of $W_{n{+}1}$ between $W_0$ and the nearest occurrence of $P'_n$;
\item[-] $W'_n=P'_nW_nS_n$.
\end {itemize}
In these terms, we have the following expressions for $W_{n+1}$ for any nontrivial iteration:
\begin{subequations}\label{eq:}
\begin{align}
W_{n+1}&=\underbrace{\phantom{xX_n}W_nS_n}\underbrace{xX_nW_nS_n}\underbrace{xX_nW_nS_n}\label{eq:s}\\
W_{n+1}&=\underbrace{\phantom{P'_n}W_nS_n}\underbrace{ P'_nW_nS_n}\underbrace{P'_nW_nS_n}\label{eq:s1}
\end{align}
\end{subequations}
The structure of the word $W_{n+1}$ imposes the following restrictions on the words $S_n$ and $S_{n+1}$:
\begin {itemize}
\item[(S1)] Since the word $X_{n{+}1}W_{n{+}1}S_{n{+}1}$ is a factor of $W_{n{+}2}$, $X_{n{+}1}$ ends with $X_n$, and $X_nW_{n+1}x=(X_nW_nS_nx)^3$ by (\ref{eq:s}), the word $S_{n+1}$ must start with $\bar{x}$, which is the first letter of $P_n$;
\item[(S2)] Since the word $S_nxX_n$ is a factor of $W_{n{+}1}$, if $X_n$ starts with $x$ [$\bar{x}x\bar{x}x$], then $S_n$ ends with $\bar{x}$ [respectively, $x$]. (Recall that $X_n\in\TM$ is an overlap-free word, whence any other prefix of $X_n$ does not restrict the last letter of $S_n$.)
\end {itemize}
Thus, our first goal is to find the words $S_n$ satisfying (S1) and (S2) such that all words $S'_n$ are cube-free. In other words, we have to construct a cube-free right-infinite word $S'_{\infty}=S_0S_1\cdots S_n\cdots$. The following lemma is easy.

\begin {Lemma} {\label {l:sqr}}
The letters $\U(n)$ and $\U(n{-}1)$ coincide if and only if $n=m\cdot2^k$ for some odd integers $m$ and $k$. 
\end {Lemma}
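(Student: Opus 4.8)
The plan is to translate the statement into the ordinary Thue--Morse sequence and then exploit its self-similarity. Since $\U=\rev(T_\infty^a)$ and the letters of $\U$ are numbered from the right starting at $0$, unwinding the definitions gives $\U(n)=T_\infty^a(n{+}1)$; writing $\tau_n=T_\infty^a(n{+}1)$ for $n\ge0$ we obtain the ordinary Thue--Morse sequence with $\tau_0=a$, and the lemma becomes the claim that $\tau_n=\tau_{n-1}$ if and only if $n=m\cdot2^k$ with $m,k$ odd. Here the only genuine content is the exponent $k=v_2(n)$, the $2$-adic valuation of $n$: since $m=n/2^k$ is automatically odd, the condition is equivalent to ``$v_2(n)$ is odd.''

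For the main argument I would use the fixed-point recursion $\tau_{2i}=\tau_i$, $\tau_{2i+1}=\overline{\tau_i}$, which follows at once from $\theta(a)=ab$, $\theta(b)=ba$ and the fixed-point identity $\theta(T_\infty^a)=T_\infty^a$ implicit in $T_{n+1}^x=T_n^xT_n^{\bar x}$. I then proceed by strong induction on $n$, splitting on the parity of $n$. If $n=2j{+}1$ is odd, then $n{-}1=2j$, so $\tau_n=\overline{\tau_j}\ne\tau_j=\tau_{n-1}$; here $v_2(n)=0$ is even, and there is indeed no coincidence. If $n=2j$ is even, then $n{-}1=2(j{-}1){+}1$, so $\tau_n=\tau_j$ while $\tau_{n-1}=\overline{\tau_{j-1}}$, whence $\tau_n=\tau_{n-1}$ is equivalent to $\tau_j\ne\tau_{j-1}$. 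By the induction hypothesis the latter holds precisely when $v_2(j)$ is even, and since $v_2(n)=1{+}v_2(j)$, this is exactly the assertion that $v_2(n)$ is odd. The base case $n=1$, where $\tau_1=b\ne a=\tau_0$ and $v_2(1)=0$, closes the induction.

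I expect no real obstacle here; the paper itself calls the lemma easy, and the inductive step is a two-line case split. The only points that demand care are bookkeeping ones, and they are precisely where a careless argument would slip. First, the reversal together with the right-to-left, zero-based indexing of $\U$ must be tracked correctly, so that $\U(n)$ really equals $\tau_n$ and not $\tau_{n+1}$ or the complementary sequence. Second, in the even case the index in $\tau_{n-1}=\tau_{2j-1}$ must be rewritten as $\tau_{2(j-1)+1}$ before the odd rule is applied, so that the hypothesis is invoked at $j$ rather than at $j{+}1$; getting this shift right is what makes the parity of the valuation come out correctly.

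As an alternative I would record the popcount proof, which avoids induction altogether. Using $\tau_n\equiv s_2(n)\pmod2$, where $s_2(n)$ is the number of ones in the binary expansion of $n$, and observing that subtracting $1$ from $n=m2^k$ turns the trailing block $1\underbrace{0\cdots0}_{k}$ into $0\underbrace{1\cdots1}_{k}$, one gets $s_2(n{-}1)=s_2(n)-1+k$. Hence $\tau_n=\tau_{n-1}$ holds if and only if $1{-}k$ is even, that is, if and only if $k$ is odd, recovering the same criterion directly.
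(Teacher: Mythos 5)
Your proposal is correct. The paper offers no proof of this lemma (it is merely declared ``easy''), and your argument supplies one soundly: the indexing bookkeeping $\U(n)=T_\infty^a(n{+}1)$ checks out against the paper's conventions, the reduction to ``$\,n=m\cdot 2^k$ with $m,k$ odd $\iff$ the $2$-adic valuation of $n$ is odd'' is right, and both the induction via $\tau_{2i}=\tau_i$, $\tau_{2i+1}=\overline{\tau_i}$ and the alternative count of binary digits give the stated criterion (one can sanity-check it on $n=2,6,8$, where the letters coincide, versus $n=1,3,4,5,7$, where they do not).
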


\begin {Rmk} \label {rm:free}
If the only left context of length $n$ of the word $W_n$ begins with $xx$, then $|X_n|>n$, because the letter before $xx$ is also fixed. Thus, by \emph{(W4)} we have $W_{n{+}1}=W_n$ (and then $S_n=\lambda$) for all values of $n$ mentioned in Lemma~\ref{l:sqr}. For all other values of $n$ ($n>3$), the iterations will be nontrivial.
\end {Rmk}

While constructing the word  $S'_{\infty}$ we follow the next four rules:
\begin {enumerate}
\item For all nontrivial iterations, $S_n\in\{T^x_2,T^x_2T^x_2,T^x_4,T^x_2T^{\overline{x}}_2T^x_1,T^x_1,T^x_1T^{\bar{x}}_2\,|\,x\in\Sigma\}$; hence, $S_n \in \TM$.
\item Whenever possible, we choose $S_n$ to be a 2-block or a product of 2-blocks.
\item Otherwise, if $S_n$ ends with the block $T^x_1$, we put $S_{n+1}=T^{\bar{x}}_1$ or $S_{n+1}=T^{\bar{x}}_1T_2^x$ (or the same possibilities for $S_{n+2}$ if $S_{n{+}1}=\lambda$).
\item If $S_n\ne\lambda$ and there is no restriction (S2) on the last letter of $S_n$, we add this restriction artificially. Namely, we fix the last letter of $S_n$ to be $\bar{x}$ if $S_{n{-}1}$ ends with $x$ (or if $S_{n{-}2}$ ends with $x$ while $S_{n{-}1}=\lambda$). 
\end {enumerate}

Taking rules 1--4 into account, we can prove, by case examination, the following lemma about the first and the last letters of the words $S_n$.

\begin {Lemma} {\label {lcher}}
\emph{(1)} If $S_n$ ends with $x$, then either $S_{n{+}1}$ ends with $\bar{x}$, or $S_{n{+}1}=\lambda$ and $S_{n{+}2}$ ends with $\bar{x}$.\\
\emph{(2)} The first letter of a nonempty word $S_n$ coincides with the last one for all $n$, except for the cases when $P_n=x\bar{x}x\bar{x}\cdots$ or $P_n=xx\bar{x}x\cdots$.
\end {Lemma}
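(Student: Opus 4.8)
The plan is to reduce both parts to two facts about the nonempty buffers, each proved by a short induction along the construction, after which the conclusions are read off from the first four letters of $P_n$. As preliminary bookkeeping I would tabulate, for the six shapes in Rule~1, the first and last letter: the blocks $T^x_2$, $T^x_2T^x_2$, $T^x_4$ open and close with $x$, while $T^x_1$, $T^x_1T^{\bar x}_2$, $T^x_2T^{\bar x}_2T^x_1$ open with $x$ and close with $\bar x$; thus ``first $=$ last'' means exactly that $S_n$ has one of the first three shapes, and the half-block shapes together with Rule~3 serve only to realise prescribed boundary letters.

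The first fact I would establish is that \emph{the last letter of every nonempty $S_n$ is $\overline{\U(n)}$}. Write $\pi$ for the letter forbidden at the $(n{+}1)$th iteration, so $\U(n{+}1)=\bar\pi$ is the first letter of $P_n$ and $X_n=\U(n\ldots1)$ (as $|X_n|=n$ for a nonempty buffer). If $X_n$ begins with $\pi$, i.e.\ $\U(n)\neq\U(n{+}1)$, then (S2) forces the last letter to be $\bar\pi=\U(n{+}1)=\overline{\U(n)}$; if $X_n$ begins with $\bar\pi\pi\bar\pi\pi$, then (S2) forces $\pi$, and here $\U(n)=\U(n{+}1)$, so again the value is $\overline{\U(n)}$. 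In the only remaining case there is no (S2) restriction, which by Lemma~\ref{l:sqr} entails $\U(n)=\U(n{+}1)$ and $\U(n{-}1)=\U(n{-}2)$, hence $S_{n-1}=\lambda$; Rule~4 then sets the last letter opposite to that of the previous nonempty buffer $S_{n-2}$, and by the induction hypothesis this gives $\overline{\,\overline{\U(n{-}2)}\,}=\U(n{-}2)=\U(n{-}1)=\overline{\U(n)}$. This proves the fact and, simultaneously, part~(1): since $\U$ is overlap-free it has no three equal consecutive letters, so two consecutive nonempty buffers $S_m,S_n$ (with $n\in\{m{+}1,m{+}2\}$) satisfy $\U(m)\neq\U(n)$, whence their last letters $\overline{\U(m)}$ and $\overline{\U(n)}$ are opposite; the case $n=m{+}2$, where $S_{m+1}=\lambda$, is exactly the second alternative in the statement.

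The second fact concerns the first letter. By (S1), for a nonempty $S_n$ arising at a nontrivial $n$th iteration the opening letter is the one continuing $\U$ past $X_{n-1}$, namely $\U(n)$. When the $n$th iteration is trivial (so $S_{n-1}=\lambda$), (S1) does not apply verbatim; the cube controlling the first letter is inherited from the last nontrivial iteration, which by Lemma~\ref{l:sqr} is the $(n{-}1)$th, and this yields $\U(n{-}1)$. Combining with the first fact: for nontrivial $n$ the endpoints are $\U(n)$ and $\overline{\U(n)}$, which differ, whereas for trivial $n$ they are $\U(n{-}1)$ and $\overline{\U(n)}$, which coincide because $S_n\neq\lambda$ forces $\U(n)\neq\U(n{-}1)$. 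It remains to match this dichotomy with $P_n=\U(n{+}1)\U(n)\U(n{-}1)\U(n{-}2)\cdots$: nontriviality of $n$ means $\U(n{-}1)\neq\U(n{-}2)$, so with $\U(n)\neq\U(n{-}1)$ the prefix of $P_n$ is $x\bar x x\bar x$ or $xx\bar x x$ according as $\U(n{+}1)\neq\U(n)$ or $\U(n{+}1)=\U(n)$ --- the two exceptional patterns; triviality of $n$ gives instead the prefixes $x\bar x xx$ or $xx\bar x\bar x$. Hence ``first $\neq$ last'' happens precisely for the alternating or double $P_n$, which is part~(2).

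The step I expect to be the main obstacle is the first fact, specifically its no-(S2)-restriction case, where the last letter is produced artificially by Rule~4 rather than by cube avoidance: here the induction must be combined with Lemma~\ref{l:sqr} both to identify the preceding nonempty buffer as $S_{n-2}$ and to supply the identity $\U(n{-}2)=\overline{\U(n)}$. A secondary care point is the first-letter computation across trivial iterations, where the controlling cube originates at the latest nontrivial step; once these two inputs are in place, the remaining shape-versus-$P_n$ verification is routine.
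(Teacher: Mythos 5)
Your argument is correct in substance, and it supplies more than the paper does: the paper offers no written proof of this lemma beyond the phrase ``by case examination'' taking rules 1--4 into account (with the cases readable off Table~\ref{tab1}), so there is no detailed argument to match yours against. Your organization around the invariant that every nonempty $S_n$ ends with $\overline{\U(n)}$ --- with the first letter given by (S1) as $\U(n)$ or $\U(n{-}1)$ according as $S_{n-1}$ is nonempty or empty --- is exactly the right way to systematize that case examination. The translation into the prefixes $x\bar x x\bar x$ and $xx\bar x x$ of $P_n$ (versus $x\bar x xx$ and $xx\bar x\bar x$ in the coinciding case) is correct, and your treatment of the Rule-4 branch is sound: the pattern $\U(n{+}1)\cdots\U(n{-}3)=\bar x\bar x x\bar x\bar x$ is indeed excluded because Lemma~\ref{l:sqr} would force both $n{+}1$ and $n{-}2$ to be even, so the absence of an (S2) restriction really does force $S_{n-1}=\lambda$ and hand control to $S_{n-2}$.

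Two loose ends deserve explicit mention. First, your two inductions need a base; the first few iterations are handled ad hoc in the paper ($W_3=W_0S_{-1}S_1$), so the base is a finite verification you should state rather than leave implicit. Second, and more substantively, the construction later departs from the general scheme at the ``trick'' iterations (the rows of Table~\ref{tab1} whose ``Start'' prohibition has three letters): there $P'_n$ is redefined, a three-letter context is prohibited instead of a single letter, and $|X_{n+5}|=n{+}4$ locally violates (W3), so (S1) and (S2) do not apply verbatim and your case split does not literally cover these values of $n$. The stated invariants do persist there (e.g.\ $S_{n+4}=ba$, $S_{n+5}=ab$ still begin with $\U(n{+}4)$, $\U(n{+}5)$ and end with their negations, as Table~\ref{tab1} confirms), and the lemma is placed in the paper before the trick is introduced, so this is arguably outside its intended scope --- but a complete proof should at least name these iterations and check them separately.
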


\setlength {\extrarowheight}{0.8pt}
\tabcolsep=3pt
\begin {table}[!htb]
\vspace*{-3mm}
\caption {the suffixes $S_n$ for 32 successive iterations starting from some number $k$ divisible by 32. The righthand [lefthand] part of the table applies if the current letter of $T^b_{\infty}$ is equal [resp., not equal] to the previous one. Trivial iterations are omitted.}
\label {tab1}
\begin {center}
\begin {tabular}  {|l|l|l|l|} 
\hline
Iteration no.  &\multicolumn{2}{c|}{Prohibitions} &  \\
$(n)$ &Start &End &$S_{n-1}$ \\

\hline
$k$ & $\overline{x}$ & $\overline{x}$ & $T^{x}_2$  \\ \hline
%[\smallskipamount]
$k+1$ &&& \\ \hline

$k+2$ & $x$ & $x$ & $T^{\overline{x}}_2T^{\overline{x}}_2$  \\ \hline

$k+4$ & $\overline{x}$ & $\overline{x}$ & $T^{x}_2$\\ \hline

$k+5$ & $\overline{x}$ & $x, T^{\overline{x}}_2$ & $T^{x}_2T^{\overline{x}}_2T^{x}_1$  \\ \hline

$k+6$ & $x$ & $\overline{x}$ & $T^{\overline{x}}_1$  \\ \hline

$k+8$ & $x$ & $x$ & $T^{\overline{x}}_2$  \\ \hline

$k+10$ & $\overline{x}$ & $\overline{x}$ & $T^{x}_2T^{x}_2$  \\ \hline

$k+12$ & $x$ & $x$ & $T^{\overline{x}}_2$ \\ \hline

$k+13$ & $x$ & $\overline{x}, T^{x}_2$ & $T^{\overline{x}}_2T^{x}_2T^{\overline{x}}_1$ \\ \hline

$k+14$ & $\overline{x}$ & $x$ & $T^{x}_1$ \\ \hline

$k+16$ & $\overline{x}$ & $\overline{x}$ & $T^{x}_2$ \\ \hline

$k+17$ & $\overline{x}$ & $x$ & $T^{x}_1$ \\ \hline

$k+18$ & $xx\overline{x}$ & $\overline{x}$ & $T^{\overline{x}}_1$ \\ \hline

$k+20$ & $\overline{x}\overline{x}x$ & $x$& $T^{\overline{x}}_2$\\ \hline

$k+21$ & $x$ & $\overline{x}$ & $T^{\overline{x}}_1$ \\ \hline

$k+22$ & $\overline{x}\overline{x}x$ & $x$ & $T^{x}_1$ \\ \hline

$k+24$ & $xx\overline{x}$ & $\overline{x}$ & $T^{x}_2$ \\ \hline

$k+26$ & $x$ & $x$ & $T^{\overline{x}}_2$ \\ \hline

$k+28$ & $\overline{x}$ & $\overline{x}$ & $T^{x}_4$ \\ \hline

$k+29$ & $\overline{x}$ & $x, T^{\overline{x}}_2$ & $T^{x}_2T^{\overline{x}}_2T^{x}_1$ \\ \hline

$k+30$ & $x$ & $\overline{x}$ & $T^{\overline{x}}_1 (T^{\overline{x}}_1T^{x}_2)$ \\ \hline
%\multicolumn{4}{c}{}
\end {tabular}
\hspace{15 pt}
\begin {tabular}  {|l|l|l|l|} %{\caption {}}
\hline
Iteration no.  &\multicolumn{2}{c|}{Prohibitions} &  \\
$(n)$ &Start &End &$S_{n-1}$ \\
\hline
$k$ & $x$ & $x$ & $T^{\overline{x}}_2$ \\ \hline

$k+1$ & $x$ & $\overline{x}$ & $T^{\overline{x}}_1$ \\ \hline

$k+2$ & $\overline{x}\overline{x}x$ & $x$ & $T^x_1$ \\ \hline

$k+4$ & $xx\overline{x}$ & $\overline{x}$ & $T^x_2$ \\ \hline

$k+5$ & $\overline{x}$ & $x$ & $T^x_1$ \\ \hline

$k+6$ & $xx\overline{x}$ & $\overline{x}$ & $T^{\overline{x}}_1$ \\ \hline

$k+8$& $\overline{x}\overline{x}x$ & $x$ & $T^{\overline{x}}_2$ \\ \hline

$k+10$ & $\overline{x}$ & $\overline{x}$ & $T^x_2$\\ \hline

$k+12$ & $x$ & $x$ & $T^{\overline{x}}_4$\\ \hline

$k+13$ & $x$ & $\overline{x}, T^x_2$ & $T^{\overline{x}}_2T^x_2T^{\overline{x}}_1$\\ \hline

$k+14$ & $\overline{x}$ & $x$ & $T^x_1$\\ \hline

$k+16$ & $\overline{x}$ & $\overline{x}$ & $T^x_2$\\ \hline

$k+17$ & $\overline{x}$ & $x$ & $T^x_1$\\ \hline

$k+18$ & $xx\overline{x}$ & $\overline{x}$ & $T^{\overline{x}}_1$\\ \hline

$k+20$ & $\overline{x}\overline{x}x$ & $x$ & $T^{\overline{x}}_2$\\ \hline

$k+21$ & $x$ & $\overline{x}$ & $T^{\overline{x}}_1$\\ \hline

$k+22$ & $\overline{x}\overline{x}x$&$x$&$T^x_1$\\ \hline

$k+24$ & $xx\overline{x}$&$\overline{x}$ & $T^x_2$\\ \hline

$k+26$&$x$&$x$&$T^{\overline{x}}_2$\\ \hline

$k+28$ & $\overline{x}$ & $\overline{x}$ & $T^x_4$\\ \hline

$k+29$ & $\overline{x}$ & $x, T^{\overline{x}}_2$ & $T^x_2T^{\overline{x}}_2T^x_1$\\ \hline

$k+30$ & $x$ & $\overline{x}$ & $T^{\overline{x}}_1$ \\ \hline
\end {tabular}
\end {center}
\vspace*{-3mm}
\end {table}

The construction of the word $S'_{\infty}$, the correctness of which we will prove, is given by Table~\ref{tab1}. According to this table, rule 3 applies to $S_n$ if and only if $P_n$ starts with $x\bar{x}x\bar{x}$. Hence if the word $P_n$ has such a prefix, then $P_{n-1}$ (or $P_{n-2}$ if the $(n{-}1)$th iteration is trivial) has no such prefix; as a result, the word $S_{n-1}$ (respectively, $S_{n-2}$) ends with a 2-block.

Now consider the case $P_n=x\bar{x}x\bar{x}\cdots$ in more details. Without loss of generality, let $P_n$ start with $b$. Then $P_n=babaab\cdots$. Since $P'_n=aabaab\ldots$, the word $S_n$ cannot end with $a$ or with $baab$; thus, it cannot end with a 2-block and we should use rule 3. 

Since $P_n$ is a factor of $\U$ while $\U$ is an infinite product of the blocks $T_2^a=abba$ and $T_2^b=baab$, one of the blocks $T_2^a$ ends in the second position of $P_n$. First consider the following occurrence of $P_n$ in $\U$:
\begin{equation} \label{uuvv}
\U = \cdots \overbrace{abba}^{T_2^a}\lefteqn{\overbrace{\phantom{abba}}^{T_2^a}}ab\underbrace{ba\overbrace{baab}^{T_2^b}\overbrace{baab}^{T_2^b}\cdots}_{P_n}
\end{equation}
Since $P'_{n-1}=bbaab\cdots$, the word $S_{n-1}$ ends with $abba$. Therefore, we cannot put $S_n=ab$ (otherwise $S_n$ will have the suffix $baab$). Further, $P_{n{-}1}$ starts with $abaab$, whence the first letter of $S_n$ is $a$ by (S1). Hence, according to rule 1, the only possibility for $S_n$ is $T^a_2T^b_2T^a_1=abbabaabab$. It is easy to see that $S_{n+1}=ba$ satisfies both (S1) and (S2). 
 
If the last embraced 2-block of (\ref{uuvv}) is $T_2^a$, not $T_2^b$, then we have, up to renaming the letters, the same case as below:
$$
\U = \cdots\overbrace{baab}^{T_2^b}\lefteqn{\overbrace{\phantom{abba}}^{T_2^a}} ab\underbrace{ba\overbrace{baab}^{T_2^b}\cdots}_{P_n}
$$
We assign, as above, $S_n=T^a_2T^b_2T^a_1$ and $S_{n+1}=T^b_1$. The problem appears on the $(n{+}5)$th iteration, because 
$$
P'_{n+4}=\underbrace{\quad b}\underbrace{bab} \underbrace{bab}aab\cdots,
$$
i.e., $S_{n{+}4}$ cannot end with $ba$ or $ab$. Here we have an exclusion from the general method. We use the following trick. At the next three iterations ($(n{+}5)$th to $(n{+}7)$th, the last of them being trivial) we have to add the prefix $baa$ to the fixed context. We will do this prohibiting 3-letter contexts instead of single letters. The word $P_{n{+}3}=babbaba\cdots$ has three left contexts of length 3: $aab$, $baa$, and $bba$. We will prohibit $bba$ on the $(n{+}5)$th iteration and $aab$ on the $(n{+}6)$th one. To do this, we deliberately put $P'_{n+4}=bba\,babbabaab\cdots$, $P'_{n+5}=aab\,babbabaab\cdots$. This allows us to choose $S_{n+4}=ba, S_{n+5}=ab$.

\begin{Rmk}
The above trick leads to one local violation of the general rule on $X_n$. Namely, $|X_{n{+}5}|=n{+}4$ (this word coincides with $X_{n+4}$). The situation is corrected on the next iteration, when we get $|X_{n{+}6}|=n{+}7$ (and the $(n{+}7)$th iteration is trivial).
\end{Rmk}

\begin{Rmk}
The word $T_2^aT_2^aT_2^bT_2^aT_2^a=\theta^2(aabaa)$ is not a factor of $\U$. Hence, the factor $T_2^aT_2^bT_2^a$ occurs in $\U$ inside the factor $T_2^bT_2^aT_2^bT_2^a$ or $T_2^aT_2^bT_2^aT_2^b$. Each such factor requires two uses of the above trick with 3-letter contexts. 
\end{Rmk}

Let us consider the 108-uniform morphism $\psi:\Sigma^*\rightarrow \Sigma^*$, defined by the rules
\begin{subequations}\label{eq:psi}
\begin{align}
\psi(a)&= T_4^aT_2^aT_2^bT_2^aT_4^bT_2^bT_2^aT_4^bT_2^bT_2^aT_2^bT_4^aT_2^aT_2^bT_2^a, \\
\psi(b)&= T_4^bT_2^bT_2^aT_2^bT_4^aT_2^aT_2^bT_4^aT_2^aT_2^bT_2^aT_4^bT_2^bT_2^aT_2^b.
\end{align}
\end{subequations}
Note that the words $\psi(b)$ and $\psi(a)$ coincide up to renaming the letters. A computer check shows that the word $\psi(aabbababbabbaabaababaabb)$ is cube-free. Hence by Theorem \ref {morph}, $\psi$ is a cube-free morphism and the word $\psi(T_{\infty}^b)$ is cube-free. So we put $S'_{\infty}=\psi(T_{\infty}^b)$. The $\psi$-image of one letter equals the product $S_{n-1}S_n\cdots S_{n+30}$ for some number $n$ divisible by 32, see Table~\ref{tab1}. The only exception is described below. Thus, such a $\psi$-image corresponds to 32 successive iterations, during which a 5-block is added to the fixed left context $X_{n-1}$ to get $X_{n+31}$.

There are two different factorizations of the $\psi$-image of a letter, depending on the positions of the factors $T_2^bT_2^aT_2^bT_2^a$ and $T_2^aT_2^bT_2^aT_2^b$ inside and on the borders of the current 5-block of $\U$. These factorizations are presented in the two parts of Table~\ref{tab1}. The mentioned factors occur in the middle of $(2k{+}1)$-blocks for each $k\ge2$. Thus, these factors occur in the middle of each 5-block, and also at the border of two equal 5-blocks. For the latter case, the factorization of the $\psi$-image of the second of two equal letters is given in the righthand part of Table~\ref{tab1}. In the lefthand part of Table~\ref{tab1}, there are two possibilities for $S_{n{+}29}$: the longer [shorter] one should be used if the next 5-block is equal [respectively, not equal] to the current one. In the first case, $S_{n{+}29}$ consists of the last two letters of the $\psi$-image of the current letter and first four letters of the $\psi$-image of the next letter. In the second case, $S_{n+29}$ consists exactly of the two last letters of the $\psi$-image.

The first several iterations are special. Namely, for the regularity of general scheme, we artificially put $W_3=W_0S_{-1}S_1$ (the 1st and the 3rd iterations are trivial by the general condition). 

Thus, we defined the words $S_n$ and then the words $W_n$ for all positive integers $n$. The correctness of the construction is based on the following lemma.

\begin {Lemma} {\label{ml1}}
The word $X_nW_n$ is cube-free for all $n \in \mathbb{N}_0$.
\end {Lemma}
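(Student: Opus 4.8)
The plan is to prove the lemma by induction on $n$, carrying along the slightly stronger statement that $Y_n := X_nW_nS_n$ is cube-free; the lemma is then the special case obtained by dropping the suffix $S_n$, since $X_nW_n$ is a prefix of $Y_n$. Trivial iterations (those of Remark~\ref{rm:free}) leave both $X_n$ and $W_n$, hence $Y_n$ up to its $S$-tail, essentially unchanged, so cube-freeness is simply inherited, and the first special iterations can be verified directly. The substance is a nontrivial iteration. There $|X_{n+1}|=|X_n|+1$ and $X_{n+1}=P_n=\bar x X_n$, so substituting \eqref{eq:s1} yields the decomposition
\begin{equation*}
X_{n+1}W_{n+1}=\bar x\,Y_n\,x\,Y_n\,x\,Y_n ,
\end{equation*}
where $x$ is the letter prohibited at this iteration. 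Thus the inductive step reduces to showing that $\bar x\,Y_n\,x\,Y_n\,x\,Y_n$ is cube-free, assuming $Y_n$ is.

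Set $q=|Y_n|+1=|xY_n|$, so the word above has length $3q$, and suppose it contains a cube $V^3$ with $|V|=p$. If $p\ge q$ then $3p\ge 3q$ equals the whole length, forcing $V^3$ to be the entire word with period $q$; but then its first and $(q{+}1)$st letters coincide, i.e. $\bar x=x$, a contradiction. This first-letter argument is exactly what prohibiting $x$ buys us. Hence $p\le|Y_n|$, and such a cube cannot lie inside a single occurrence of $Y_n$ without contradicting the induction hypothesis. Therefore $V^3$ must meet one of the three ``seams'' carrying the inserted single letters $\bar x,x,x$.

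I would then localize the cube. Every factor of length $\le q$ lies inside one of the two overlapping windows $\bar x Y_nxY_n$ and $xY_nxY_n$, so a short cube ($3p\le q$) reduces to the cube-freeness of these windows, i.e. to the absence of a cube straddling a single border $\dots W_nS_n\,|\,xX_nW_n\dots$. A longer cube that avoids the initial letter $\bar x$ lies inside the factor $Y_nxY_nxY_n$, which has the period $q$; combining this period with the cube's period $p$ on their common range and invoking the Fine and Wilf theorem forces a short common period, and — together with the overlap-freeness of the Thue--Morse factors $X_n,P_n,P'_n\in\TM$ — contradicts $Y_n$ being cube-free (the residual short ranges in which Fine--Wilf does not apply are disposed of by direct inspection of these Thue--Morse factors). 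A cube that does use the leading $\bar x$ is a prefix cube and is excluded because $P_n=\bar x X_n\in\TM$ is overlap-free and its continuation into $W_n,S_n$ is again governed by the single border above.

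The main obstacle is precisely this border analysis: one must show that the buffer $S_n$, inserted to kill the local cube noted in the remark following~\eqref{eq:nos}, never allows a square at the border $\dots S_n\,|\,xX_n\dots$ to be completed to a cube. This is a finite case analysis on the last letters of $S_{n-1}$ and $S_n$ and the first letters of $X_n$ and $P'_n$, driven by conditions (S1) and (S2) and by Lemma~\ref{lcher}, and using overlap-freeness of the Thue--Morse factors involved; the exceptional constructions (the three-letter-context trick and the first iterations) must be inspected separately to confirm they introduce no cube. Finally, the strengthened claim that $Y_n=X_nW_nS_n$ is cube-free adds only the seam $S_{n-1}\,|\,S_n$, which is subsumed by the cube-freeness of $S'_\infty=\psi(T^b_\infty)$ guaranteed by Theorem~\ref{morph}. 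I expect this border case analysis, rather than the periodicity arguments, to carry essentially all of the difficulty.
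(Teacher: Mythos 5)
Your skeleton coincides with the paper's: the decomposition $X_{n+1}W_{n+1}=\bar{x}\,Y_n\,x\,Y_n\,x\,Y_n$ is exactly the paper's word $V_n=(X_nW_nS_nx_n)^3$ with the trailing $x_n$ traded for the leading $\bar{x}$, the first-letter argument is precisely how the paper exploits the prohibited letter, and Fine and Wilf is invoked in the same role. The problems are in the two places where you wave your hands. First, your case split has a hole: a cube of length $3p$ with periods $p$ and $q$ yields a contradiction via Fine and Wilf only when $3p\ge p+q-\gcd(p,q)$, i.e.\ roughly $p\ge q/2$, while the ``fits in one window'' argument needs $3p\le q$. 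Periods in the range $q/3<p<q/2$ are covered by neither. The paper closes this gap with a second, recursive application of Fine and Wilf at level $n-1$: the long prefix of $X_nW_n$ has period $p_{n-1}=|V_{n-1}|/3$, and playing this period against the cube's period shows the cube cannot contain the whole factor $X_nW_n$, which confines it to the two-period window $X_nW_nS_nx_nX_nW_n$. Nothing in your sketch plays the role of this second periodicity argument.

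Second, and more seriously, the border analysis is not ``a finite case analysis on the last letters of $S_{n-1}$ and $S_n$ and the first letters of $X_n$ and $P'_n$.'' After the periodicity reductions, a cube straddling the seam $\ldots S_n\,|\,x_nX_n\ldots$ may still have period up to about $q/2-1$ and therefore reaches arbitrarily deep into $W_n$ on both sides as $n$ grows; conditions (S1), (S2) and Lemma~\ref{lcher} only kill the very short cubes sitting at the seam. To exclude the long ones the paper needs genuinely global structural facts: that $S'_n$ occurs in $V_n$ exactly three times (its Claim~1, which rests on $S'_{\infty}$ having no Thue--Morse factors of length $>48$, on $S'_n$ beginning with a 4-block, and on $P'_k$ never beginning with a 2-block); the parity of the positions of the factors $aa$ and $bb$ inside products of 2-blocks, which the inserted letter $x_n$ breaks; and the spacing of the occurrences of $aabaa$ (equivalently, of $W_0$) inside $W_{n-1}$. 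None of these appear in your proposal, and without them the seam-crossing cubes are simply not ruled out. A smaller inaccuracy of the same kind: appending $S_n$ to $X_nW_n$ adds not just the seam $S_{n-1}|S_n$ but all cubes ending in $S_n$ and beginning inside $W_n$; the paper's Claim~2 needs Claim~1 and condition (S1) to dispose of these, not merely the cube-freeness of $S'_{\infty}=\psi(T^b_{\infty})$.
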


\begin{proof}
We prove by induction that all the words $V_n=(X_nW_nS_nx_n)^3$, where $x_n$ is the letter forbidden on $(n{+}1)$th iteration, have no proper factors that are cubes. This fact immediately implies the statement of the lemma. The inductive base $n\le4$ can be easily checked by hand or by computer. Let us prove the inductive step.The structure of the word $V_n$ is illustrated by the following picture.

\centerline{ 
\unitlength=1.15mm
\begin{picture}(129,24)(9,2)
\thicklines
\put(18,5){\line(1,0){120}}
\multiput(18,5)(40,0){4}{\line(0,1){3}}
\thinlines
\multiput(28,5)(40,0){3}{\line(0,1){3}}
\multiput(49,5)(40,0){3}{\line(0,1){3}}
\multiput(55,5)(40,0){3}{\line(0,1){3}}
\multiput(38.7,6)(40,0){3}{\makebox(0,0)[cb]{\small$W_n$}}
\multiput(52.2,6)(40,0){3}{\makebox(0,0)[cb]{\small$S_n$}}
\multiput(23.2,6)(40,0){3}{\makebox(0,0)[cb]{\small$X_n$}}
\multiput(56.6,6)(40,0){3}{\makebox(0,0)[cb]{\small$x_n$}}
\put(11,6){\makebox(0,0)[cb]{$V_{n}=$}}
\put(28,13){\line(1,0){4}}
\put(38,13){\line(1,0){34}}
\put(55,13){\line(0,-1){3}}
\multiput(28,13)(4,0){2}{\line(0,-1){3}}
\multiput(38,13)(4,0){2}{\line(0,-1){3}}
\multiput(68,13)(4,0){2}{\line(0,-1){3}}
\multiput(30.2,13.5)(10,0){2}{\makebox(0,0)[cb]{\small$W_0$}}
\put(48.7,13.5){\makebox(0,0)[cb]{\small$S'_n$}}
\put(61.7,13.5){\makebox(0,0)[cb]{\small$P'_n$}}
\put(70.2,13.5){\makebox(0,0)[cb]{\small$W_0$}}
\put(78,13.5){\makebox(0,0)[cb]{\small$\cdots$}}
\qbezier(18,16.5)(38,25)(58,16.5)
\qbezier(58,16.5)(78,25)(98,16.5)
\qbezier(98,16.5)(118,25)(138,16.5)
\multiput(38.2,21.5)(40,0){3}{\makebox(0,0)[cb]{\small$p_n$}}
\end{picture} }

Assume to the contrary that the word $V_n$, $n\ge5$, contains some cube $U^3$. Of course, it is enough to consider the case when the $(n{+}1)$th iteration is nontrivial. The factor $U^3$ of $V_n$ has periods $q=|U|$ and $p_n=|V_n|/3$, but obviously does not satisfy the interaction property. Hence, $|U^3|=3q\le q+p_n-2$ by the Fine and Wilf theorem, yielding $q\le p_n/2-1$. On the other hand, by definition of $W_n$, the longest proper suffix of the word $X_nW_n$ coincides with the longest proper prefix of $V_{n{-}1}$. If $U^3$ contains this prefix, then the latter has periods $q$ and $p_{n-1}=|V_{n-1}|/3$. Applying the Fine and Wilf theorem again, we get $p_{n-1}\le q/2-1$. Excluding $q$ from the two obtained inequalities, we get $p_n\ge 4p_{n-1}+3$. But $p_n=|V_{n-1}|+|S_n|+1\le 3p_{n-1}+17$. Thus, $p_{n-1}\le14$. For $n\ge5$, this is not the case. So, we conclude that $U^3$ does not contain the word $X_nW_n$. \\[5pt]
\textit{Claim 1.} The word $S'_n$ occurs in $V_n$ only three times.
\begin{proof}
Recall that $S'_n$ is a product of 2-blocks (possibly except the last ``odd'' 1-block), and if $n\ge5$, then $S'_n$ begins with a 4-block. Hence, $S'_n$ has no factor $W_0$ and, moreover, cannot begin inside $W_0$. Furthermore, it can be checked by hand or by computer that $S'_{\infty}$ has no Thue-Morse factors of length ${>}48$. Now looking at the structure of $S'_n$ and of $V_n$ one can conclude that any ``irregular'' occurrence of $S'_n$ in $V_n$ should be a prefix of some word $S'_kP'_kW_0$, where $k<n$. The word $S'_k$ is a proper prefix of $S'_n$. The word $P'_k$ is obtained from a Thue-Morse factor by changing the first letter, and hence never begins with a 2-block. Hence, the only possibility is $k=n-1$, and $S_n$ should be the 1-block coinciding with the prefix of $P'_k$. By Table~\ref{tab1}, in all cases when $S_n$ is a 1-block, $P'_{n-1}$ begins with the square of letter, so this possibility cannot take place.  
\end{proof} 
\noindent \textit{Claim 2.} The word $X_nW_nS_nx_n$ is cube-free.
\begin {proof}
The word $X_nW_n$ is a factor of $V_{n-1}$ and hence is cube-free by the inductive assumption. Using again the fact that $S'_n$ is ``almost'' a product of 2-blocks, we conclude that $S'_nx_n$ is also cube-free. So, a cube in $X_nW_nS_nx_n$, if any, contains inside the suffix $S'_{n-1}$ of the word $W_n$. This suffix is preceded by $W_0=aabaaba$; the latter word breaks all periods of $S'_{n-1}$ and does not produce a cube. Hence, the cube should contain more than one occurrence of the factor $S'_{n-1}$. Applying Claim~1 to the words $S'_{n-1}$ and $V_{n-1}$, we see that the cube has the period $p_{n-1}=(|X_nW_n|{+}1)/3$. But this is impossible by condition (S1). The claim is proved.
\end{proof}
Combining Claim~2 with the fact that $U^3$ has no factor $X_nW_n$, we get that $U^3$ is contained inside the word $X_nW_nS_nx_nX_nW_n$. Furthermore, if $S'_n$ is a factor of $U^3$, then the middle occurrence of $U$ is inside $S'_n$ (otherwise, $U^3$ contains one more occurrence of $S'_n$, contradicting Claim~1). In this case, the positions of all factors $aa$ and $bb$ in $U$ have the same parity. But the rightmost occurrence of $U$ in $U^3$ contains a suffix of $S'_n$ followed by a prefix of the word $x_nX_n=P'_n$. The letter $x_n$ breaks this parity of positions, which is impossible. The cases in which all the positions of $aa$ and $bb$ in the rightmost occurrence of $U$ are on the same side of the letter $x_n$, can be easily checked by hand. Thus, we obtain that $S'_n$ is not a factor of $U^3$. Thus, $U^3$ begins inside the factor $S'_nx_n$.

Where the word $U^3$ ends? It is easy to see that the word 
$$
X_nW_n=\bar{x}_{n-1}X_{n-1}W_{n-1}S_{n-1}x_{n-1}X_{n-1}W_{n-1}S_{n-1}x_{n-1}X_{n-1}W_{n-1}S_{n-1}
$$
has the same three occurrences of the factor $S'_{n-1}$ as $V_{n-1}$. So, if $U^3$ contains $S'_{n-1}$, then the middle occurrence of $U$ is inside $S'_{n-1}$. But this is impossible because $S'_{n-1}$ is a rather short suffix of $W_{n-1}$ and the whole word $X_nW_n$ is cube-free. Therefore, $U^3$ should end inside the prefix $\bar{x}_{n-1}X_{n-1}W_{n-1}S_{n-1}$ of $X_nW_n$, like in the following picture.
\newline
\unitlength=1.1mm
\centerline{ 
\begin{picture}(120,22)(0,2)
\thicklines
\put(0,6) {\line(1,0){116}}
\put(0,6) {\line(0,1){3}}
\multiput (35,6) (81,0) {2} {\line(0,1) {3}}
\thinlines
\qbezier(35,9)(47.5,17)(60,9)
\qbezier(60,9)(72.5,17)(85,9)
\qbezier(85,9)(97.5,17)(110,9)
\put(97.5,14){\makebox(0,0)[cb]{\small$p_{n-1}$}}
\multiput (30,6) (80.5,0) {2} {\line(0, 1) {3}}
\put (28,18) {\line (1, 0) {27}}
\multiput (28,15) (9, 0) {4} {\line(0, 1) {3}}
\put(14,6.5){\makebox(0,0)[cb]{\small$S'_n$}}
\put(33,6.5){\makebox(0,0)[cb]{\small$x_n$}}
\multiput(32.5,18.5)(9,0){3}{\makebox(0,0)[cb]{\small$U$}}
\put(48,6.5){\makebox(0,0)[cb]{\small$\bar{x}_{n-1}X_{n-1}W_{n-1}S_{n-1}$}}
\multiput(73,6.5)(25,0){2}{\makebox(0,0)[cb]{\small$x_{n-1}X_{n-1}W_{n-1}S_{n-1}$}}
\put(113,6.5){\makebox(0,0)[cb]{\small$S_n$}}
\end{picture}
}
Using the same parity argument as above, we conclude that the word $S'_nx_nX_n=S'_nP'_n$ is cube-free and, moreover, $U^3$ should contain the prefix $aabaa$ of the word $W_{n-1}$. Two cases are to be considered: either $aabaa$ is a factor of $U$ or $aabaa$ occurs in $U^3$ only twice, on the borders of consecutive $U$'s. The second case is impossible, because two closest occurrences of $aabaa$ in $W_{n-1}$ are separated by the factor $babaababbaabbabaabaabb$ which does not contain $P'_n$ as a suffix. For the first case, we get that some (not the leftmost) occurrence of $aabaa$ in $U^3$ is preceded by the concatenation of some suffix of $S'_n$ and the word $P'_n$. If this occurrence of $aabaa$ is a prefix of some $W_0$, then it is preceded by some $P'_k$, $k<n$. But $P'_k$ is not a suffix of $P'_n$, a contradiction. The remaining position for this occurrence of $aabaa$ is the border of some words $S'_k$ and $P'_k$. But then $S'_k$ contains the factor which is on the border between $S'_n$ and $P'_n$, and the parity argument shows that $S'_k$ cannot be partitioned into 2-blocks. This final contradiction shows that $U^3$ cannot be a factor of $V_n$. The lemma is proved.
\end{proof}

\smallskip
By construction, the word $X_n$ is the fixed left extension of $W_n$. Now we consider the second step, that is, the completion of such ``almost uniquely'' extendable word $W_n$ to a premaximal word. The main idea is the same as at the first step. In order to obtain a premaximal word of level $n$, we build the word $W_{n+1}$ in $n{+}1$ iterations by scheme~(\ref {eq:s}) and then prohibit the extension of $W_{n+1}$ by the first letter of the word $P_n$. We denote the obtained premaximal word of level $n$ by $\overline{W}_n$. Then
\begin {equation}
\label {eq:pm}
\overline{W}_n=\underbrace{\phantom{P_n}W_{n+1}\overline{S}_n}\underbrace{P_nW_{n+1}\overline{S}_n}\underbrace{P_nW_{n+1}\overline{S}_n},
\end {equation}
where $\overline{S}_n$ is a ``buffer'' inserted similarly to $S_n$ in order to avoid cubes at the border of the occurrences of $W_{n{+}1}$ and $P_n$. In contrast to the first step, we do not need to build a cube-free right-infinite word, because the construction~(\ref {eq:pm}) is used only once. The form of the word $\overline{S}_n$ depends on the last iteration according to Table~\ref{tab1}; this dependence is described in Table~\ref{tab2}. We choose $\overline{S}_n$ to be the left extension of the word $P_n$ within $\U$ (recall that $P_n{=}\U(n{+}1\ldots1)$).

\begin {table}[!htb]
\setlength {\extrarowheight}{0.8pt}
\vspace*{-3mm}
\caption {the ``final'' suffixes $\overline{S}_n$ for the corresponding iterations from Table~\ref{tab1}. The first column contains the number of the last iteration.}
\label {tab2}
\begin {center}
\begin {tabular} {|l|l|l|}
\hline
Iteration no.  &Prohibitions &  \\
$(n)$ &(Start) &$\overline{S}_{n-1}$ \\
\hline
$k$ &  &\\ \hline

$k+1$ & $\overline{x}$ & $x\overline{x}$  \\ \hline

$k+3$ & $x$ & $\overline{x}$  \\ \hline

$k+4$ & $x$ & $\lambda$ \\ \hline

$k+5$ & $\overline{x}$ & $x\overline{x}\overline{x}x$  \\ \hline

$k+7$ & $\overline{x}$ & $x\overline{x}$  \\ \hline

$k+9$ & $x$ & $\overline{x}x$  \\ \hline

$k+11$ & $\overline{x}$ & $x$  \\ \hline

$k+12$ & $\overline{x}$ & $\lambda$  \\ \hline

$k+13$ & $x$ &  $\lambda$ \\ \hline

$k+15$ & $x$ & $\overline{x}$  \\ \hline

$k+16$ & $x$ & $\lambda$ \\ \hline

$k+18$ & $xx\overline{x}$ & $x\overline{x}$  \\ \hline

$k+19$ &  &   \\ \hline

$k+20$ & $\overline{x}$ & $\lambda$   \\ \hline

$k+23$ & $\overline{x}\overline{x}x$ & $\overline{x}x$  \\ \hline

$k+25$ & $\overline{x}$ & $x\overline{x}$  \\ \hline

$k+27$ & $x$ & $\overline{x}$  \\ \hline

$k+28$ & $x$ & $\lambda$ \\ \hline

$k+29$ & $\overline{x}$ &$\lambda$ \\ \hline

$k+31$ & $\overline{x}$ & $x$  \\ \hline
  
\end {tabular}
\hspace{15 pt}
\begin {tabular} {|l|l|l|}
\hline
Iteration no.  &Prohibitions &  \\
$(n)$ &(Start) &$\overline{S}_{n-1}$ \\
\hline
$k$&$\overline{x}$&$\lambda$\\ \hline

$k+1$ & & \\ \hline

$k+3$ & $\overline{x}\overline{x}x$ & $\overline{x}$ \\ \hline

$k+4$ & $x$ & $\lambda$ \\ \hline

$k+5$ & & \\ \hline

$k+7$ & $xx\overline{x}$ & $x\overline{x}$ \\ \hline

$k+9$ & $x$ & $\overline{x}x$ \\ \hline

$k+11$ & $\overline{x}$ & $x$ \\ \hline

$k+12$ & $\overline{x}$ & $\lambda$ \\ \hline

$k+13$ & $x$ & $\lambda$ \\ \hline

$k+15$ & $x$ & $\overline{x}$ \\ \hline

$k+16$ & $x$ & $\lambda$ \\ \hline

$k+18$ & & \\ \hline

$k+19$ & $xx\overline{x}$ & $x$ \\ \hline

$k+20$ & $\overline{x}$ & $\lambda$ \\ \hline

$k+23$ & $\overline{x}\overline{x}x$ & $\overline{x}x$ \\ \hline

$k+25$ & $\overline{x}$ & $x\overline{x}$ \\ \hline

$k+27$ & $x$ & $\overline{x}$ \\ \hline

$k+28$ & $x$ & $\lambda$ \\ \hline

$k+29$ & $\overline{x}$ & $\lambda$ \\ \hline

$k+31$& $\overline{x}$ & $x\overline{x}$ \\ \hline
 
\end {tabular}
\end {center}
\vspace*{-3mm}
\end {table}

The above idea works without additional gadgets in all cases when $|X_n|=n$. Due to the following obvious remark, it is enough to construct left premaximal words of level $n$ for all $n$ such that $|X_n|=n$; hence, we do not consider constructing the words $\overline{W}_n$ for other values of $n$. 

\begin{Rmk}{\label{Rsc}}
In order to prove the Theorem~\ref{left}, it is sufficient to show the existence of left premaximal words of level $n$ for infinitely many different values of $n$. Indeed, if a word $W$ is left premaximal of level $n$ and $a_1\cdots a_nW$ is a left maximal word, then the word $a_nW$ is left premaximal of level $n{-}1$.
\end {Rmk} 

Using the facts that $W_{n+1}\in\CF$, $\overline{S}_nP_n\in\TM$, and the suffix $S'_n$ of $W_{n{+}1}$ has no long Thue-Morse factors (this is the property of any $\psi$-image), we prove the following lemma. The proof resembles the one of Lemma~\ref{ml1}.

\begin {Lemma} \label{main2}
The word $X_n\overline{W}_n$ is cube-free for all $n \in \mathbb{N}_0$.
\end {Lemma}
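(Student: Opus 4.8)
The plan is to follow the proof of Lemma~\ref{ml1} closely, with the cube-free block $X_nW_{n+1}$ in place of $X_nW_n$ and the final buffer $\overline S_n$ in place of $S_n$. Write $x=x_n$ for the letter prohibited at the $(n{+}1)$th iteration, so that $P_n=\bar xX_n$ and $P'_n=xX_n$. Expanding~(\ref{eq:pm}) and substituting $P_n=\bar xX_n$ gives
\[
X_n\overline W_n=X_nW_{n+1}\overline S_n\cdot\bar xX_nW_{n+1}\overline S_n\cdot\bar xX_nW_{n+1}\overline S_n,
\]
whence $X_n\overline W_n\,\bar x=(X_nW_{n+1}\overline S_n\bar x)^3$. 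Denoting this cube by $\overline V_n$ and its period $|X_nW_{n+1}|+|\overline S_n|+1$ by $\overline p_n$, we see that $X_n\overline W_n$ is a proper prefix of $\overline V_n$, so it suffices to prove that $\overline V_n$ has no cube among its \emph{proper} factors. Two facts I would record at once: first, $X_nW_{n+1}$ is cube-free, being a factor of $X_{n+1}W_{n+1}$ (since $X_{n+1}$ ends with $X_n$ by (S1)), which is cube-free by Lemma~\ref{ml1}; second, by~(\ref{eq:s1}) the block $X_nW_{n+1}=(X_nW_nS_nx)^2X_nW_nS_n$ carries the period $p_n=|V_n|/3$ over its whole length $3p_n-1$.

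Suppose, for contradiction, that some proper factor $U^3$ of $\overline V_n$ is a cube, and set $q=|U|$. As in Lemma~\ref{ml1}, $U^3$ has periods $q$ and $\overline p_n$ but violates the interaction property, so Fine and Wilf give $q\le\overline p_n/2-1$. First I would show that $U^3$ cannot contain a whole block $X_nW_{n+1}$: such a block would inherit the period $q$ of $U^3$ in addition to its own period $p_n$, and since $3p_n-1$ is long enough, Fine and Wilf would endow $X_nW_{n+1}$ with the period $\gcd(q,p_n)$; as $X_nW_{n+1}$ is cube-free this gcd must equal $p_n$, forcing $q=p_n$ (because $p_n\le q<2p_n$), which a condition-(S1) argument at the block boundary excludes exactly as in Claim~2 of Lemma~\ref{ml1}. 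Hence $U^3$ lies inside two consecutive blocks $X_nW_{n+1}\overline S_n\bar xX_nW_{n+1}$.

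It then remains to rule out a cube straddling the single junction $W_{n+1}\overline S_n\bar xX_nW_{n+1}$, and here I would reproduce the two claims of Lemma~\ref{ml1}. The marker argument (Claim~1) shows that the suffix $S'_n$ of $W_{n+1}$ --- a product of $2$-blocks carrying no Thue--Morse factor of length ${>}48$ --- occurs in $\overline V_n$ exactly three times, once at the end of each $W_{n+1}$; consequently, if $S'_n$ were a factor of $U^3$ its middle copy would lie inside one occurrence of $S'_n$, and the period $q$ would have to match this $\psi$-image against the adjacent buffer. But $\overline S_n\bar xX_n=\overline S_nP_n\in\TM$ is a genuine Thue--Morse factor, so propagating it by the period $q$ into the $S'_n$-region would create a Thue--Morse factor of length ${>}48$ inside $S'_n$, which is impossible; the complementary case, that $q$ is so small that $U^3$ stays inside one $W_{n+1}$ together with the short tail $\overline S_n\bar x$, contradicts $W_{n+1}\in\CF$ once the single-block cube-freeness of $X_nW_{n+1}\overline S_n\bar x$ is checked as in Claim~2. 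This localizes and then eliminates every cube, just as in Lemma~\ref{ml1}.

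The hard part will be precisely this junction bookkeeping. Since $\overline S_n$ no longer prolongs the infinite cube-free word $S'_\infty$ but is read off from Table~\ref{tab2} according to the number of the last iteration, I cannot invoke cube-freeness of $S'_\infty$; instead I must confirm, case by case against Tables~\ref{tab1} and~\ref{tab2}, that for each admissible shape of $\overline S_n$ the word $\overline S_nP_n$ indeed lies in $\TM$, that it is glued to the non-Thue--Morse suffix $S'_n$ of the preceding $W_{n+1}$ and to the prefix $W_0=aabaaba$ of the following $W_{n+1}$ (itself not a product of $2$-blocks), and that every candidate period of a junction-crossing cube either forces a too-long Thue--Morse factor inside $S'_n$ or a cube inside the cube-free word $X_nW_{n+1}$. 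Matching the finitely many rows of Table~\ref{tab2} in this way, and settling the few short boundary words by direct inspection, is the delicate but routine core of the proof.
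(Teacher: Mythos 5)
Your proposal is correct and takes essentially the same route as the paper: the paper itself only sketches this proof, saying it ``resembles the one of Lemma~\ref{ml1}'' and rests on the three facts that $W_{n+1}\in\CF$, that $\overline{S}_nP_n\in\TM$, and that the suffix $S'_n$ has no long Thue--Morse factors, which are exactly the ingredients you deploy in the same roles (with the parity/marker argument of Lemma~\ref{ml1} adapted to the fact that the buffer is now a genuine Thue--Morse factor). Your explicit acknowledgement that the junction analysis reduces to the table-driven case check is consistent with the level of detail the paper provides.
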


Since the word $P_n\overline{W}_n$ is a cube by (\ref{eq:pm}) and at the same time $P_n=X_{n+1}$ is the fixed left context of $W_{n+1}$, we conclude that $X_n$ is the longest left context of the word $\overline{W}_n$. Theorem~\ref{left} is proved.

\begin{Rmk}
For any $n$, the word $\rev(\overline{W}_n)=\overline{W}_n(|\overline{W}_n|)\cdots\overline{W}_n(1)$ is right premaximal of level $n$.
\end{Rmk}

\begin{Rmk}
Our construction provides an upper bound for the length of the shortest left premaximal word of any given level $n$. The results of \cite{Cur} suggest that this length is exponential in $n$. Let $l(n)=|W_n|$. For nontrivial iterations, we have $l(n)=3l(n{-}1)+O(n)$. It is well known that two successive letters in the Thue-Morse word are equal with probability $1/3$. Thus, to obtain $W_n$, we make approximately $2n/3$ nontrivial iterations. So, $l(n)$ is exponential at base $3^{2/3}\approx2.08$. The same  property holds for $|\overline{W}_n|=3l(n{+}1)+O(n)$. It is interesting whether this asymptotics is the best possible.
\end{Rmk}

\begin{proof}[Sketch of the proof of Theorem~\ref{lr}]
Similar to Remark~\ref{Rsc}, it is enough to build premaximal words of level $(n_i,n_i)$ for some infinite sequence $n_1<n_2<\ldots<n_i<\ldots$ of positive integers. We take $n_i=32i+3$ (Table~\ref{tab2} indicates that $\overline{S}_{n_i}=\lambda$, which makes the construction easier). The natural idea is to concatenate left premaximal and right premaximal words through some ``buffer'' word. But we cannot use the words $\overline{W}_n$ for this purpose, because all words $X_n\overline{W}_n$ appear to be right maximal.

So, we modify the last step in constructing left premaximal words as follows. The proof of Lemma~\ref{ml1} implies that the word $X_nW_nS_n\cdots S_{n{+}l}$ is cube-free for any $l$. So, we put
$$
\widetilde{W}_{n_i}=\underbrace{\phantom{P_{n_i}}W_{n_i{+}1}S_{n_i{+}1}S_{n_i{+}2}}\underbrace{P_{n_i}W_{n_i{+}1}S_{n_i{+}1}S_{n_i{+}2}}
\underbrace{P_{n_i}W_{n_i{+}1}S_{n_i{+}1}S_{n_i{+}2}}.
$$ 
By Table~\ref{tab1}, $S_{n_i+3}=\lambda$ and $S_{n_i+4}(1)\ne S_{n_i+1}(1)=x$. The proof of the fact that $X_{n_i}\widetilde{W}_{n_i}\in\CF$ reproduces the proof of Lemma~\ref{main2}. Recall that $S_{n_i+1}(1)=P_{n_i}(1)$ by (S1), yielding that this letter breaks the period of $W_{n_i+1}$ (see (\ref{eq:s1})). On the other hand, the letter $\bar{x}$ breaks the global period of the word $\widetilde{W}_{n_i}$. Hence, the condition $X_{n_i+1}W_{n_i+1}S_{n_i+1}\cdots S_{n_i{+}l}\in\CF$ implies $X_{n_i}\widetilde{W}_{n_i}S_{n_i+3}\cdots S_{n_i{+}l}\in\CF$ for any $l$. Thus, $\widetilde{W}_{n_i}$ is infinitely extendable to the right, left premaximal word of level $n_i$.

Choose an even $m$ such that $|X_{n_i}\widetilde{W}_{n_i}|<2^{m-2}$ and consider the word $\widetilde{W}_{n_i,n_i}=\widetilde{W}_nT_m^{\bar{x}}\rev(\widetilde{W}_n)$:

\centerline{ 
\begin{picture}(95,19)(0,-3)
\gasset{Nw=0.15,Nh=3,Nmr=0,AHnb=0}
\drawline (0, 0) (100,0)
\multiput (0,0) (18,0) {2} {\line (0, 1) {3}} 
\multiput (82,0) (18,0) {2} {\line (0, 1) {3}} 
\drawline (0, 9) (30, 9)
\drawline (70, 9) (100, 9)
\multiput (0,9) (30,0) {2} {\line (0, -1) {3}} 
\multiput (70,9) (30,0) {2} {\line (0, -1) {3}} 
\put (15, 9.5) {\makebox(0,0)[cb]{$\widetilde{W}_{n_i}$}}
\put (85, 9.5) {\makebox(0,0)[cb]{$\rev(\widetilde{W}_{n_i})$}}
\multiput (30, 0) (40, 0) {2} {\line (0, 1) {3}}
\put (16, 1.5) {\makebox (0,0) [cb] {\scriptsize{$W_0$}}}
\put (24, 0.5) {\makebox (0,0) [cb] {\scriptsize{$S'_{n_i+2}$}}}
\put (76, 0.5) {\makebox (0,0) [cb] {\scriptsize{$\rev(S'_{n_i+2}\!)$}}}
\put (50, 0.5) {\makebox (0,0) [cb] {$T_m^{\bar{x}}$}}
\put (-8, 0) {\makebox (0,0) [cb] {$\widetilde{W}_{n_i,n_i}{=}$}}
\end{picture}
}
It remains to prove that the word $X_{n_i}\widetilde{W}_{n_i,n_i}\rev(X_{n_i})$ is cube-free. By the choice of $m$ and overlap-freeness of $T_m^{\bar{x}}$, no cube can contain the factor $T_m^{\bar{x}}$. So, by symmetry, it is enough to check that the word $U=X_{n_i}\widetilde{W}_{n_i}T_m^{\bar{x}}$ is cube-free. Assume to the contrary that it contains a cube $YYY$. Recall that the word $X_{n_i}\widetilde{W}_{n_i}$ is cube-free. Since the first letter of $T_m^{\bar{x}}$ breaks the period of $X_{n_i}\widetilde{W}_n$, one has $|Y|<\per(\widetilde{W}_{n_i})$. Consider the rightmost factor $aabaa$ in $U$; it is inside the factor $W_0$ immediately before the suffix $S'_{n_i{+}2}$ of $\widetilde{W}_n$. If this factor belongs to $YYY$, then $|Y|$ symbols to the left we have another $aabaa$, followed by $S'_{n_i{+}2}$. Then $|Y|=\per(\widetilde{W}_{n_i})$, a contradiction. Hence, $YYY$ has no factors $aabaa$, i.e., is a factor of $abaaba\,S'_{n_i{+}2}T_m^{\bar{x}}$. One can check that the word $S'_{n_i{+}2}$ contains no Thue-Morse factors of length $>48$. The shorter factors can be checked by brute force.

Thus, the word $\widetilde{W}_{n_i,n_i}$ is premaximal of level ($n_i,n_i$). The theorem is proved.
\end{proof}

\nocite{*}
\bibliographystyle{eptcs}
\begin {thebibliography} {0}
\providecommand{\urlalt}[2]{\href{#1}{#2}}
\providecommand{\doi}[1]{doi:\urlalt{http://dx.doi.org/#1}{#1}}

\bibitem {AS}
J.-P. Allouche, J. Shallit (2003): \textit{Automatic Sequences: Theory, Applications, Generalizations}, Cambridge Univ. Press, \doi{10.1017/CBO9780511546563}.

\bibitem{BEM}
D. R. Bean, A. Ehrenfeucht, G. McNulty (1979):
\textit{Avoidable patterns in strings of symbols}, Pacific J. Math. \textbf{85}, 261--294. 

\bibitem {Bra}
F.-J. Brandenburg (1983): 
\textit {Uniformly growing $k$-th power free homomorphisms}, Theor. Comput. Sci. \textbf{23}, 69--82, \doi{10.1016/0304-3975(88)90009-6}.

\bibitem {Cur}
J. D. Currie (1995): \textit{On the structure and extendability of $k$-power free words}, European J. Comb. \textbf{16}, 111--124, \doi{10.1016/0195-6698(95)90051-9}.

\bibitem {CR}
J. D. Currie, N. Rampersad (2009): \textit{There are $k$-uniform cubefree binary morphisms for all $k\ge0$}, Discrete Appl. Math. \textbf{157}, 2548--2551, \doi{10.1016/j.dam.2009.02.010}. Available at http://arxiv.org/abs/0812.4470v1.

\bibitem {JPB}
R.\,M. Jungers, V.\,Y. Protasov, V.\,D. Blondel (2009):
\textit {Overlap-free words and spectra of matrices}, Theor. Comput. Sci. \textbf{410}, 3670--3684, \doi{10.1016/j.tcs.2009.04.022}. Available at http://arxiv.org/abs/0709.1794.

\bibitem {Lo}
M. Lothaire (1983): {\it Combinatorics on words}, Addison-Wesley, Reading, \doi{10.1017/CBO9780511566097}.

\bibitem {RS}
A. Restivo, S. Salemi (2002): {\it Words and Patterns}, Proc. 5th Int. Conf. Developments in Language Theory. Springer, Heidelberg, 117--129. (LNCS Vol. \textbf{2295}), \doi{10.1007/3-540-46011-X\_9}.

\bibitem {RW}
G. Richomme, F. Wlazinski (2000): {\it About cube-free morphisms}, Proc. STACS'2000. Springer, Berlin, 99--109. (LNCS Vol. \textbf{1770}), \doi{10.1007/3-540-46541-3\_8}.

\bibitem{See}
P. S\'e\'ebold (1984): {\it Overlap-free sequences}, Automata on Infinite Words. Ecole de Printemps d'Informatique Theorique, Le Mont Dore. Springer, Heidelberg, 207--215. (LNCS Vol. \textbf{192}).

\bibitem {Sh_r98}
A.\,M. Shur (1998): {\it Syntactic semigroups of avoidable languages}, Siberian Math. J. \textbf{39} (1998), 594--610.

\bibitem {Sh_r00}
A.\,M. Shur (2000): {\it The structure of the set of cube-free Z-words over a two-letter alphabet}, Izv. Math. \textbf{64}(4), 847--871, \doi{10.1070/IM2000v064n04ABEH000301}.

\bibitem {Sh09dlt}
A.\,M. Shur (2009): {\it Two-sided bounds for the growth rates of power-free languages}, Proc. 13th Int. Conf. on Developments in Language Theory. Springer, Berlin, 466--477. (LNCS Vol. \textbf{5583}), \doi{10.1007/978-3-642-02737-6\_38}.

\bibitem {Sh11sf}
A.\,M. Shur (2011): 
\newblock {\it Deciding context equivalence of binary overlap-free words in linear time},  Semigroup Forum. (Submitted)

\bibitem {Th12}
A. Thue (1912): {\it \"Uber die gegenseitige Lage gleicher Teile gewisser Zeichentreihen}, Norske Vid. Selsk. Skr. I, Mat. Nat. Kl. \textbf{1}. Christiana, 1--67.

\end {thebibliography}

\end{document}